\newtheorem{thm}{Theorem}[section]
\newtheorem{prop}[thm]{Proposition}
\newcommand{\sde}{\emph{SDE}}
\newcommand{\refeq}[1]{~(\ref{#1})}
\newcommand{\myref}[1]{~\ref{#1}}
\newcommand{\mycite}[1]{~\cite{#1}}
\newcommand{\R}{\mathbb{R}}
\newcommand{\rv}{\textit{rv}}
\newcommand{\id}{\textit{id}}
\newcommand{\iid}{\textit{iid}}
\newcommand{\sd}{\textit{sd}}
\newcommand{\pdf}{\textit{pdf}}
\newcommand{\chf}{\textit{chf}}
\newcommand{\che}{\textit{che}}
\newcommand{\Levy}{L\'{e}vy}
\newcommand{\Pqo}{\bm{P}\hbox{-\emph{a.s.}}}
\newcommand{\eqd}{\stackrel{d}{=}}
\newcommand{\EXP}[1]{\bm{E}\left[{#1}\right]}
\newcommand{\VAR}[1]{\bm{V}\left[{#1}\right]}
\newcommand{\SK}[1]{\bm{Skew}\left[{#1}\right]}
\newcommand{\KUR}[1]{\bm{Kurt}\left[{#1}\right]}
\newcommand{\lap}{\mathfrak{La}}
\newcommand{\poiss}{\mathfrak{P}}
\newcommand{\unif}{\mathfrak{U}}
\newcommand{\erl}{\mathfrak{E}}
\newcommand{\ou}{\emph{OU}}
\newcommand{\arem}{$a$-remainder}
\newcommand{\dilog}{\text{Li}_2}
\DeclareMathOperator{\sign}{sign}
\title{\Huge \textbf{Exact Simulation of Variance Gamma related OU processes: \\
Application to the Pricing of Energy Derivatives.}\footnote{The views, opinions, positions or strategies expressed in this work are those of the author and do not represent the views, opinions and strategies of, and should not be attributed to E.ON SE.}}
\author{Piergiacomo \textsc{Sabino}\footnote{piergiacomo.sabino@eon.com}\\
Quantitative Methods, E.ON SE\\
\vspace{5pt}
 Br\"{u}sseler Platz 1, 45131 Essen, Germany
}
\date{}
\begin{document}
    \maketitle \thispagestyle{empty}
        \begin{abstract}
\noindent In this study we define a three-step procedure to relate the self-decomposability of the stationary law of a generalized Ornstein-Uhlenbeck process to the law of the increments of such processes.			

Based on this procedure and the results of Qu et al.\mycite{QDZ19}, we derive the exact simulation, without numerical inversion,
of the skeleton of a Variance Gamma, and of a symmetric Variance Gamma driven Ornstein-Uhlenbeck process. 
Extensive numerical experiments are reported to demonstrate the accuracy and efficiency of our algorithms. 

These results are instrumental to simulate the spot price dynamics in
energy markets and to price Asian options and gas storages by Monte Carlo simulations in a framework similar to the one discussed in Cummins et al.\mycite{CKM17, CKM18}. 

\vspace{0.2cm}
\noindent\textbf{Keywords} Monte Carlo, Exact simulation, Non-Gaussian Ornstein-Uhlenbeck (OU) processes,
OU-Variance-Gamma processes, Energy Markets, Energy Derivatives.
				\end{abstract}

\section{Introduction}\label{sec:Introduction}
The modeling based on non-Gaussian Ornstein-Uhlenbeck (OU) processes has received a considerable attention in the recent
literature in an attempt to accommodate features such as jumps, heavy tails and asymmetry
which are well evident in real phenomena. For instance, with regards to financial and econometric applications, energy markets, and commodity markets in general, exhibit mean-reversion, seasonality and sudden spikes; mean-reversion in particular, cannot be captured by ordinary \Levy\ processes.

The availability of simulation techniques of easy implementation is important for analysis,
validation and estimation purposes. Indeed, direct likelihood analysis is often
impracticable for these models, whereas, Monte Carlo (MC) based techniques and generalized method of moments (GMM) approaches can be a viable route to estimate the model parameters.

As observed in a series of papers by Barndorff-Nielsen\mycite{BN98}, Barndorff-Nielsen et al.\mycite{BJS1998}, Barndorff-Nielsen and Shephard\mycite{BNSh01, BNSh03}, the concept of self-decomposability (see Sato\mycite{Sato} and Cufaro Petroni\mycite{Cufaro08}) plays an essential role in the theory of generalized OU-processes. In this paper we  define a simple three-steps procedure to determine the characteristic function, and therefore the cumulant function, of a OU process and its relation to the characteristic function of what we name the \emph{a-reminder} of a self-decomposable law. 
This machinery demonstrates to be a powerful tool to determine simulation algorithms for generalized OU-processes as well as to simplify already existing proofs (see for instance Qu et al.\mycite{QDZ19} and Bianchi et al.\mycite{Bianchi2017}).  

Relying on the results of Qu et al.\mycite{QDZ19}, the main contribution of this article is the development of exact simulation schemes to generate the skeleton of Variance Gamma (VG) driven OU processes (OU-VG) discussed in Cummins et al.\mycite{CKM17, CKM18}. The extensive simulation experiments show that our algorithms are efficient and accurate therefore, suitable for concrete applications.

To this end, the modeling of energy markets with non-Gaussian OU processes has been 
discussed, among others, in Benth et al.\mycite{BKM07}, Meyer-Brandis and Tankov\mycite{MBT2008}
and recently in Benth and  Pircalabu\mycite{BenthPircalabu18} in the context of modeling wind power
futures. Compared to mean-reverting jump-diffusion models (see for instance, Cartea and Figueroa\mycite{CarteaFigueroa} and Kjaer\mycite{Kjaer2008})   these models exhibit the competitive advantage of having less parameters.

We illustrate the applicability of our schemes in the pricing Asian options and gas storages by MC simulation using market dynamics similar to those discussed in Cummins et al.\mycite{CKM17, CKM18}. Once more, our algorithms demonstrate to be efficient and reasonably fast to compute the fair values of such energy derivatives. Although MC methods are not as fast as other numerical techniques as FFT and quantization methods (see for instance Jaimungal and Surkov\mycite{JaimungalSurkov11} and Bardeau et al. \mycite{BBP07}), they nevertheless, give the possibility to compute different quantiles of the price distribution and are independent on contract payoffs. 

The remainder of the article is organized as follows. Section\myref{sec:Preliminaries} recalls the properties of generalized OU processes 
and introduces the conceptual procedure which we will use in order to develop the simulation
schemes. In Section\myref{sec:OUVG} we derive the characteristic function of the law of the increments of OU-VG processes and develop simulation schemes for the skeleton of such processes. In this section, we also demonstrate the effectiveness of our algorithms through extensive numerical experiments. Section\myref{sec:FinancialApplications} illustrates some financial applications: we consider the pricing of Asian options by MC simulations in a $2$-factor market driven by the sum of a standard VG process and a OU-VG process then, we consider the pricing of gas storages using a one-factor spot dynamics similar to the setting discussed in Cummins et al.\mycite{CKM17, CKM18}. Finally, Section\myref{sec:Conclusions} concludes the paper with an overview of future inquiries and further possible applications. 
\section{Preliminaries}\label{sec:Preliminaries}
Following Barndorff-Nielsen and Shephard\mycite{BNSh01}, we consider a \Levy\ process $Z(t)$ and  the
generalized OU process defined by the \sde
            \begin{equation}\label{eq:genOU_sde}
              dX(t) =  -kX(t)dt + dZ(t) \quad\qquad X(0)=X_0\quad \Pqo\qquad
              k>0.
            \end{equation}
with solution 
\begin{equation}
X(t) = X(0)\,e^{-kt} + \int_0^t e^{k(t-v)}dZ(v).
\label{eq:sol:OU}
\end{equation}
 Here $Z(t)$ is called the Backward Driving \Levy\ Process
(\emph{BDLP}), and we will adopt the following notation: if
$\mathfrak{D}$ is the stationary law of $X(t)$, we will say that
$X(t)$ is a $\mathfrak{D}$-\ou\ process; if on the other hand, $Z(1)$
  is distributed
according to the  \id\ (infinitely divisible) law
$\widetilde{\mathfrak{D}}$, then we will say that $X(t)$ is an
\ou-$\widetilde{\mathfrak{D}}$ process. Now a well known result (see
for instance Cont and Tankov~\cite{ContTankov2004} or
Sato~\cite{Sato}) is that,  a given one-dimensional
distribution $\mathfrak{D}$ always is the stationary law of a
suitable \ou-$\widetilde{\mathfrak{D}}$ process if and only if
$\mathfrak{D}$ is self-decomposable.

We recall that a law with probability density (\pdf) $f(x)$ and
characteristic function (\chf) $\varphi(u)$ is said to be
\emph{self-decomposable} (\sd) (see Sato\mycite{Sato} or Cufaro
Petroni~\cite{Cufaro08}) when for every $0<a<1$ we can find another
law with \pdf\ $g_a(x)$ and \chf\ $\chi_a(u)$ such that
                \begin{equation}\label{aremchf}
                    \varphi(u)=\varphi(au)\chi_a(u)
                \end{equation}
We will accordingly say that a random variable (\rv) $X$ with \pdf\
$f(x)$ and \chf\ $\varphi(u)$ is \sd\ when its law is \sd: looking
at the definition, this means that for every $0<a<1$ we can always
find two \emph{independent} \rv's, $Y$ (with the same law of $X$)
and $Z_a$ (here called \emph{\arem}), with \pdf\ $g_a(x)$
and \chf\ $\chi_a(u)$ such that
                \begin{equation}\label{sdec-rv}
                    X\eqd aY+Z_a\qquad\quad\Pqo
                \end{equation}
As observed in Barndorff\mycite{BJS1998}, $X(t)$ is stationary if and only if the \chf\ of the \sd\ stationary law $\varphi_X(u)$ is of the form $\varphi_X(u)=\varphi_X(u\,e^{-kt})\chi(u,t)$, where $\chi(u,t)$ denotes the \chf\ of the second term of Equation\refeq{eq:sol:OU}. Defining the cumulant function of a \rv\ $Y$ as $\kappa_Y(u)=\log\EXP{e^{uY}}$, it turns out  that there is precise relation between the  cumulant function of the stationary distribution $\bar{\kappa}_X(u)$, that of $Z(1)$, denoted $\kappa_Z(u)$, and that of the second term of Equation\refeq{eq:sol:OU}, denoted $\varrho_X(u,t)$ (see also Taufer and Leonenko\mycite{TAUFER2009} and Schoutens\mycite{Schoutens03}).
\begin{eqnarray}
\bar{\kappa}_X(u) &=& \int_0^{+\infty} \kappa_Z(u\,e^{-kt}) ds\\
\varrho_X(u, t) &=& \bar{\kappa}_X(u) - \bar{\kappa}_X(u\,e^{-kt})   
\label{eq:ou:cumulants}
\end{eqnarray}
The last equation means that the law of the second term of Equation\refeq{eq:sol:OU} coincides with that of the \arem\ of the law of the stationary distribution if one takes $a=e^{-kt}$. 
A similar observation was also mentioned in Gaver and Lewis\mycite{gaver_lewis_1980}, Lawrence\mycite{L80} and later in Wolfe\mycite{WOLFE1982} in the context of first order auto-regressive processes $X_n = \rho X_{n-1} + \epsilon_n$, in case $0\le \rho <1$, that are the discrete-time equivalent of OU processes.

This facts give a useful machinery to determine the \chf\ or the cumulant of OU processes which of course, can be used to find simulation algorithms.
\begin{itemize}
	\item Find the cumulant function of the stationary distribution given the BDLP.
	\item Find the cumulant function of the \arem\ of the marginal distribution.
	\item Set $a = e^{-kt}$.
\end{itemize}
On the other hand, based on the observations above, the sequential generation of the skeleton of $X(t)$ on a time grid $t_1, \dots t_M$
consists in finding a simulation algorithm for the \arem\ of the stationary law assuming at each step $a_i=e^{-k(t_{i}-t_{i-1})}, i=1, \dots, M$.
Hereafter, without loss of generality, we will assume an equally-spaced time grid with $\Delta t = t_i-t_{i-1}, \forall i=1,\dots, M$.

Finally, because the cumulant function $\kappa_X(u,t)=\log\EXP{e^{uX(t)}}$ can also be written in terms of the cumulant function $\kappa_Z(u)$ of $Z(1)$ as (see Cont and Tankov\mycite{ContTankov2004} Lemma 15.1)
\begin{equation}
	\kappa_X(u,t) = uX(0)e^{-kt} + \int_0^t\kappa_Z\left(ue^{-k(t-v)}\right)dv=uX(0)e^{-kt} + \varrho_X(u,t),
\label{eq:cumulant:function}
\end{equation}
one can relate the cumulants $\kappa_{X,n}$ of $X(t)$ to the cumulants $\kappa_{Z,n}$ of $Z(1)$
\begin{eqnarray}
\EXP{X(t)} &=& X(0)e^{-kt} + \frac{\kappa_{Z,1}}{k}\left(1-e^{-kt}\right)\label{eq:cumulants:ou1}\\
\kappa_{X,n} &=& \frac{\kappa_{Z, n}}{n\,k}\left(1-e^{-nkt}\right), \quad n\ge 2
\label{eq:cumulants:ou2}
\end{eqnarray}
and therefore one can have useful benchmarks to test the performance of the simulation algorithms or to carry out an estimation procedure based on the generalized method of moments.

\section{OU-VG Processes and their Exact Simulation}\label{sec:OUVG}
The VG process, introduced in Madan and Seneta\mycite{MadanSeneta90}, can be seen as a Brownian Motion (BM) where the clock ticks with a random time described by a gamma subordinator $G(t)$.

We recall that the gamma law $\Gamma(\alpha, \beta)$ is famously \sd\ (see Grigelionis\cite{Gri03}) and has the following \pdf\ and \chf\  
	\begin{eqnarray*}
			f_{\alpha, \beta}(x)&=&\frac{\beta^{\alpha}x^{\alpha-1}}{\Gamma(\alpha)}e^{- \beta x}\mathbbm{1}_{x \ge 0}\\
			\phi_{\Gamma}(u) &=& \left(\frac{\beta}{\beta - iu}\right)^{\alpha}
	\end{eqnarray*}
	where $\Gamma(\cdot)$ is the Euler gamma function,  $\alpha>0$ and $\beta>0$ are called \emph{shape} and \emph{rate} parameters, respectively.
	
	The gamma process $G(t, \alpha, \beta)$ is a continuous-time process 
	with stationary, independent gamma increments such that
	for any $h > 0$,
	\begin{equation}
			G(t + h, \alpha, \beta)-G(t, \alpha, \beta) \sim \Gamma(\alpha h, \beta).
	\end{equation}
	therefore, it is a subordinator (see Sato\mycite{Sato}).
        In order to guarantee that the stochastic clock $G(t)$ is an unbiased reflection of calendar time (see Joshi\mycite{Joshi2005}) we need to set $\EXP{G(t)}=t$. The law of the increments of $G(t)$ now depends on one parameter only $\nu=\frac{1}{\alpha}= \frac{1}{\beta}$.
Denoting now $G(t, \nu)$ the gamma process with the above parameters restriction, the VG process is defined as follows:
\begin{equation}\label{eq:def:dVG}
	V(t) = \theta G(t, \nu) + \sigma W(G(t, \nu)),
\end{equation}
with characteristic exponent (\che) $\psi_{VG}(u)$:
\begin{equation}\label{eq:che:vg}
	 \psi_{VG}(u) = \log\EXP{e^{iuV(1)}}=-\frac{1}{\nu}\log\left(1 - iu\theta\nu + u^2\frac{\sigma^2\nu}{2}\right).
\end{equation}
for $\sigma>0$ and $\theta \in \R$ constants.
				
Since the VG is a process of finite variation, it can be written
as difference of two increasing gamma processes
\begin{equation*}
V(t) = \gamma_p(t, \mu_p, \nu_p) - \gamma_n(t, \mu_n, \nu_n)
\end{equation*} 
with $\mu_p, \nu_p, \mu_n, \nu_u, \nu$ satisfying the following equations
\begin{eqnarray*}
\mu_p &=& \frac{1}{2}\sqrt{\theta^2 + \frac{2\sigma^2}{\nu}} + \frac{\theta}{2}, \\
\mu_n &=& \frac{1}{2}\sqrt{\theta^2 + \frac{2\sigma^2}{\nu}} - \frac{\theta}{2}, \\
\nu_p &=& \mu_p^2 \nu, \\
\nu_n &=& \mu_n^2 \nu.	
\end{eqnarray*}
The \che\ can then be rewritten as
\begin{equation}
	\psi_{VG}(u) = -\frac{1}{\nu}\log\left(1-iu\frac{\nu_p}{\mu_p}\right) -\frac{1}{\nu}\log\left(1+iu\frac{\nu_n}{\mu_n}\right)=\psi_{\Gamma_p}(u)+\psi_{\Gamma_n}(-u),
	\label{eq:vg:ch:exponent}
\end{equation}
where $\psi_{\Gamma_p}(u)$ and $\psi_{\Gamma_n}(u)$ are the \che's of a $\Gamma(\frac{1}{\nu}, \frac{\mu_p}{\nu_p})$ and a $\Gamma(\frac{1}{\nu}, \frac{\mu_n}{\nu_n})$ law, respectively (therefore of the difference of two independent gamma-distributed \rv's). 
When $\theta=0$ - in case of a symmetric VG (SVG) - it simplifies to
\begin{equation}
	\psi_{SVG}(u) = -\frac{1}{\nu}\log\left(1 + u^2\frac{\sigma^2\nu}{2}\right) = \psi_{\Gamma}(u) - \psi_{\Gamma}(-u).
	\label{eq:svg:ch:exponent}
\end{equation}
where now $\psi_{\Gamma}(u)$ denotes the \che\ of a $\Gamma(\frac{1}{v}, \frac{2}{\sigma^2\nu})$ law.
Using the machinery illustrated in Section~\ref{sec:Preliminaries}, we can calculate the cumulant function $\bar{\kappa}_X(u)$ of the stationary distribution of the VG driven OU process 
\begin{equation}
	X(t) = X(0)e^{-kt} + \int_0^te^{-k(t-v)}dV(v)
\label{eq:OUVG}
\end{equation}
and the cumulant function $\varrho_X(u,t)$.
\begin{prop}
	The cumulant function $\bar{\kappa}_X(u)$ and $\varrho_X(u,t)$ of a OU-VG process are given by
	\begin{equation}
		\bar{\kappa}_X(u)= \frac{1}{k\nu}\left(\DiLog{u\frac{\mu_p}{\nu_p}}+ \DiLog{-u\frac{\mu_n}{\nu_n}}\right).
	\label{eq:cumulant:OUVG:stat}
	\end{equation}
	\begin{equation}
		\varrho_X(u,t) = \frac{1}{k\nu}\left(\DiLog{u\frac{\mu_p}{\nu_p}}-\DiLog{u\frac{\mu_p\,e^{-kt}}{\nu_p}} + \DiLog{-u\frac{\mu_n}{\nu_n}}+\DiLog{-u\frac{\mu_n\,e^{-kt}}{\nu_n}}\right)
	\label{eq:cumulant:OUVG}
	\end{equation}
\end{prop}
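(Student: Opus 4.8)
The plan is to apply the three-step machinery of Section~\ref{sec:Preliminaries} directly, so that the only genuine computation is a single scalar integral that collapses to a dilogarithm. First I would record the cumulant function $\kappa_Z(u)=\log\EXP{e^{uV(1)}}$ of the VG law. Reading it off the characteristic exponent\refeq{eq:vg:ch:exponent} through the substitution $iu\mapsto u$ (equivalently, evaluating $\psi_{VG}$ at $-iu$), and recalling that the VG is the difference of the two independent gamma processes $\Gamma(\tfrac1\nu,\beta_p)$ and $\Gamma(\tfrac1\nu,\beta_n)$ with rates $\beta_p=\mu_p/\nu_p$ and $\beta_n=\mu_n/\nu_n$, gives a sum of two gamma cumulants,
\begin{equation*}
\kappa_Z(u)=-\frac1\nu\log\!\left(1-\frac{u}{\beta_p}\right)-\frac1\nu\log\!\left(1+\frac{u}{\beta_n}\right),
\end{equation*}
valid on the $u$-strip where both logarithms are defined.

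The second and central step is to feed $\kappa_Z$ into the stationary-cumulant formula $\bar{\kappa}_X(u)=\int_0^{+\infty}\kappa_Z(u\,e^{-ks})\,ds$ from Section~\ref{sec:Preliminaries}. I would treat the two logarithmic terms separately; for a generic term with coefficient $w=u/\beta$ the change of variables $y=e^{-ks}$ (so that $ds=-dy/(ky)$, with $y$ decreasing from $1$ to $0$) turns the $s$-integral into the integral representation of the dilogarithm,
\begin{equation*}
\int_0^{+\infty}\log\!\left(1-w\,e^{-ks}\right)ds=\frac1k\int_0^1\frac{\log(1-wy)}{y}\,dy=-\frac1k\DiLog{w}.
\end{equation*}
Applying this with $w=u/\beta_p$ to the positive part and with $w=-u/\beta_n$ to the negative part, and carrying the overall factor $-1/\nu$, assembles into the claimed closed form\refeq{eq:cumulant:OUVG:stat} for $\bar{\kappa}_X(u)$.

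The third step needs no further integration. By the difference relation $\varrho_X(u,t)=\bar{\kappa}_X(u)-\bar{\kappa}_X(u\,e^{-kt})$ of\refeq{eq:ou:cumulants}, I would simply subtract from $\bar{\kappa}_X(u)$ the same expression with $u$ replaced by $u\,e^{-kt}$; since every dilogarithm argument is linear in $u$, this rescales each of them by the factor $e^{-kt}$ and yields\refeq{eq:cumulant:OUVG} at once.

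I expect the main obstacle to be analytic bookkeeping rather than any conceptual hurdle: the crux is to recognize the $s$-integral as the integral representation of $\DiLog{\cdot}$ and then to control the admissible range of $u$, so that the arguments $u/\beta_p$ and $-u/\beta_n$ stay off the branch cut $[1,+\infty)$ of the dilogarithm and both the improper $s$-integral and the resulting series converge. A secondary point requiring care is to keep the rate ratios and the $\pm$ signs of the two gamma factors consistent throughout the substitution, since it is precisely these that fix the arguments and signs appearing in\refeq{eq:cumulant:OUVG:stat} and\refeq{eq:cumulant:OUVG}.
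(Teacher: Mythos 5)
Your proof is correct and follows essentially the same route as the paper's: decompose the VG cumulant into two gamma cumulants, substitute $x=e^{-ks}$ to turn the stationary-law integral into the integral representation of $\dilog$, and obtain $\varrho_X$ as the difference $\bar{\kappa}_X(u)-\bar{\kappa}_X(u\,e^{-kt})$ (the paper equivalently integrates over $[e^{-kt},1]$). Note only that your computation actually produces the arguments $u\,\nu_p/\mu_p$ and $-u\,\nu_n/\mu_n$ and a minus sign in front of the last dilogarithm of\refeq{eq:cumulant:OUVG} --- in agreement with the paper's own proof and with the symmetric-case formula\refeq{eq:cumulant:OUSVG:stat} --- so the inverted ratios and the final ``$+$'' in the displayed statement are typos that you should not claim to have matched literally.
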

\begin{proof}
Denoting $\kappa_{VG}(u)=\psi_{VG}(-iu)$, we have
\begin{equation*}
	\bar{\kappa}_X(u)=\int_0^{+\infty} \kappa_{VG}(ue^{-ks})ds = \int_0^{+\infty} \kappa_{\Gamma_p}(u
	e^{-ks})ds + \int_0^{+\infty} \kappa_{\Gamma_n}(-ue^{-ks})ds
\end{equation*}
where $\kappa_{\Gamma_p}(u)=\psi_{\Gamma_p}(-iu)$ and $\kappa_{\Gamma_n}(u)=\psi_{\Gamma_n}(-iu)$.
With the change of variable $x=e^{-ks}$ we have
\begin{equation*}
	\bar{\kappa}_X(u)= -\frac{1}{k\nu}\left(\int_0^1 \frac{\log(1-u\,x\frac{\nu_p}{\mu_p})}{x}dx + 
	\int_0^1 \frac{\log(1+u\,x\frac{\nu_n}{\mu_n})}{x} dx\right)
\end{equation*}	
Of course, the last two terms are the cumulant functions of a stationary $OU-\Gamma$ processes and of its negative counterpart (see Qu et al.\mycite{QDZ19} and Table 2 in Barndorff-Nielsen and Shephard\mycite{BNSh03}) whichx can be written in terms of the dilogarithmic Spencer's function $\dilog(z)=-\int_0^z\frac{\log(1-y)}{y}dy, z\in\mathbb{C}$ (see Gradshteyn and Ryzhik\mycite{gradshteyn2007})
\begin{equation*}
	\bar{\kappa}_X(u)= \frac{1}{k\nu}\left(\DiLog{u\frac{\mu_p}{\nu_p}} + \DiLog{-u\frac{\mu_n}{\nu_n}}\right).
\end{equation*}
Hence
\begin{eqnarray*}
	\varrho_X(u,t) &=& -\frac{1}{k\nu}\left(\int_{e^{-kt}}^1 \frac{\log\left(1-u\,x\frac{\nu_p}{\mu_p}\right)}{x}dx + 
	\int_{e^{-kt}}^1 \frac{\log\left(1+u\,x\frac{\nu_n}{\mu_n}\right)}{x} dx\right)\\
	&=& \frac{1}{k\nu}\left(\DiLog{u\frac{\mu_p}{\nu_p}}-\DiLog{u\frac{\mu_p\,e^{-kt}}{\nu_p}} + \DiLog{-u\frac{\mu_n}{\nu_n}}-\DiLog{-u\frac{\mu_n\,e^{-kt}}{\nu_n}}\right)
\end{eqnarray*}
that concludes the proof.	
\end{proof}
By simply setting $\theta=0$ we retrieve the cumulant function relative to the OU-SVG process of Cummins et al.\mycite{CKM17}.
	\begin{equation}
		\bar{\kappa}_X(u)= \frac{1}{2k\nu}\dilog\left(u^2\frac{\sigma^2\nu}{2}\right),
	\label{eq:cumulant:OUSVG:stat}
	\end{equation}
	and
	\begin{equation}
		\varrho_X(u,t) = \frac{1}{2k\nu}\left(\dilog\left(u^2\frac{\sigma^2\nu}{2}\right) - \dilog\left(u^2\frac{\sigma^2\nu}{2}\,e^{-kt}\right)\right)
	\label{eq:cumulant:OUSVG}
	\end{equation}


\subsection{Simulation Algorithms}\label{subsec:SimulationAlgorithms}
From the results of the previous section we can conclude that the simulation of a OU-VG process consists in the repetition of the simulation a OU-$\Gamma$ process two times and then take the difference. To this end, Qu et al.\mycite{QDZ19} found that a \rv\ $Y$ with cumulant function 
\begin{equation*}
	\varrho_Y(u, \Delta t) = -\frac{\alpha}{k}\int_{e^{-k\Delta t}}^1 \frac{\log\left(1+\frac{u\,x}{\beta}\right)}{x}dx
\end{equation*}	
can be decomposed into the sum of a gamma-distributed \rv\ $Y_1\sim\Gamma(\alpha\,\Delta t, \beta\,e^{k\Delta t})$ and a compound Poisson process $Y_2=\sum_{m=1}^MJ_m$ with intensity $\lambda=\frac{\alpha k\Delta t^2}{2}$ and exponentially-distributed jumps $J_m$ with random rate $\beta\,e^{k\Delta t\sqrt{U}}$ and $U\sim\mathcal{U}([0,1])$. Based on this result, the simulation of skeleton of an OU-VG on an equally-spaced time grid $t_0, t_1, \dots, t_M$ with step $\Delta t$ consists in nothing less than simulating a $Y$-like \rv\ two times at each step and then taking the difference  as illustrated in Algorithm~\ref{alg:sim:vg}.
\begin{algorithm}
\caption{ }\label{alg:sim:vg}
\begin{algorithmic}[1]
		\For{ $m=1, \dots, M$}
		\State Generate $G_p\sim\Gamma(\frac{\Delta t}{\nu}, \frac{\mu_p}{\nu_p}\,e^{k\Delta t})$ 
		\State Generate $G_n\sim\Gamma(\frac{\Delta t}{\nu}, \frac{\mu_n}{\nu_n}\,e^{k\Delta t})$. 
		\State Generate $r\sim\poiss(\frac{k\Delta t^2}{2\nu})$\Comment{Poisson \rv\ with intensity $\frac{k\Delta t^2}{2\nu}$}
		\State Generate $s\sim\poiss(\frac{k\Delta t^2}{2\nu})$
		\State Generate $r$ \iid\ uniform \rv's $\bm{u}=(u_1, \dots, u_r)\,\sim\,\unif([0,1]^r)$.
		\State Generate $s$ \iid\ uniform \rv's $\bm{v}=(v_1, \dots, v_s)\,\sim\,\unif([0,1]^s)$.
		\State $\beta_{p,i}\gets\frac{\mu_p}{\nu_p} e^{k\Delta t \sqrt{u_i}}, i=1,\dots, r$.
		\State $\beta_{n,j}\gets\frac{\mu_n}{\nu_n} e^{k\Delta t \sqrt{v_j}}, j=1,\dots, s$.
		\State Generate $r$ \iid\ $J_{p,i}\sim\erl_1(\beta_{p,i}), i=1,\dots, r$, \Comment {Exponential \rv's with rate $\beta_{p,i}$}
		\State Generate $s$ \iid\ $J_{n,j}\sim\erl_1(\beta_{n,j}), j=1,\dots, s$, \Comment {Exponential \rv's with rate $\beta_{n,j}$}
		\State $X(t_m)\gets X(t_{m-1})e^{-k\Delta t} + G_p - G_n + \sum_{i=1}^r J_{p,i} - \sum_{j=1}^s J_{n,j}$.
		\EndFor
		\end{algorithmic}
\end{algorithm}
On the other hand, the simulation procedure of a symmetric VG can be simplified observing that for $\theta=0$, $\frac{\mu_p}{\nu_p}=\frac{\mu_n}{\nu_n}=\frac{1}{\sigma}\sqrt{\frac{2}{\nu}}$ and that the difference $C=C_1-C_2$ of two \iid\ compound Poisson processes $C_1=\sum_{n_1}^{N_1}U_{n_1}$ and $C_2=\sum_{n_2}^{N_2}D_{n_2}$ with intensity $\lambda$ and with exponentially distributed jumps has the same law of $\sum_n^N(U_n-D_n)$ with intensity $2\lambda$. It is well known that the difference of exponentially distributed \rv's with the same scale parameter $\mu$ is distributed according to a central Laplace law $\lap(\mu)$. Such a \rv\ can be efficiently generated using the inverse transformation method (see Devroye\mycite{Dev86}) observing that the inverse of the cumulative distribution is
\begin{equation*}
	F_L^{-1}(y) = -\mu\sign(y-0.5)\,\ln(1 - 2|y-0.5|).
\end{equation*} 
Based on these observations, the steps of the sequential simulation of the skeleton of a symmetric OU-SVG process are summarized in Algorithm~\ref{alg:sim:svg}
\begin{algorithm}
\caption{ }\label{alg:sim:svg}
\begin{algorithmic}[1]
		\For{ $m=1, \dots, M$}
		\State Generate $G_p\sim\Gamma(\frac{\Delta t}{\nu}, \frac{1}{\sigma}\sqrt{\frac{2}{\nu}}\,e^{k\Delta t})$ 
		\State Generate $G_n\sim\Gamma(\frac{\Delta t}{\nu}, \frac{1}{\sigma}\sqrt{\frac{2}{\nu}}\,e^{k\Delta t})$. 
		\State Generate $r\sim\poiss(\frac{k\Delta t^2}{\nu})$\Comment{Poisson \rv\ with intensity $\frac{k\Delta t^2}{\nu}$}
		\State Generate $r$ \iid\ uniform \rv's $\bm{u}=(u_1, \dots, u_r)\,\sim\,\unif([0,1]^r)$.
		\State $\mu_{i}\gets\sigma\sqrt{\frac{\nu}{2}} e^{-k\Delta t \sqrt{u_i}}, i=1,\dots, r$.
		\State Generate $r$ \iid\ $J_{i}\sim\lap(\mu_{i}), i=1,\dots, r$, \Comment {Laplace \rv's with parameter $\mu_{i}$}
		\State $X(t_m)\gets X(t_{m-1})e^{-k\Delta t} + G_p - G_n + \sum_{i=1}^r J_{i}$.
		\EndFor
		\end{algorithmic}
\end{algorithm}


\subsection{Numerical Experiments}\label{subsec:NumericalExperiments}
In this section, we illustrate the performance and effectiveness of our algorithms through extensive
numerical experiments. 
All the simulation experiments in the present paper
have been conducted using \emph{MATLAB R2019a} with a $64$-bit
Intel Core i5-6300U CPU, 8GB 
\footnote{The relative codes are available at \url{https://github.com/piergiacomo75/OUVarianceGamma} }.  
As an additional validation, the
comparisons of the simulation computational times have
also been performed with \emph{R} and \emph{Python}  leading to the same conclusions.

The numerical validation and tests for
our algorithms are based on the comparison to the true expected value, variance, skewness and kurtosis of the OU-VG process. Because of Equations\refeq{eq:cumulants:ou1}and\refeq{eq:cumulants:ou2} and the relation between the cumulants, these quantities relative to $X(t+\Delta t) = X(t)e^{-k\Delta t} + \int_0^{\Delta t}e^{-k(t-v)}V(v)$ are
\begin{eqnarray*}
	\EXP{X(t+\Delta t)}&=& a\, X(t) + (1-a) \frac{\theta}{k}\\
	\VAR{X(t+\Delta t)}&=& (1-a^2)  \frac{\sigma^2 + \theta^2\nu}{2k}\\
	\SK{X(t+\Delta t)}&=& \frac{2\sqrt{2 k}}{3}\frac{(1-a^3)}{(1-a^2)^{3/2}}\frac{2\theta^3\nu^2 + 3\sigma^2\theta\nu}{\left(\sigma^2+\theta^2\nu\right)^{3/2}}\\
	\KUR{X(t+\Delta t)}&=& k\frac{(1+a^2)}{(1-a^2)}\times\frac{3\sigma^4\nu+12\sigma^2\theta^2\nu^2+6\theta^4\nu^3}{\left(\sigma^2 + \theta^2\nu\right)^2} + 3
\end{eqnarray*}
\noindent where $a=e^{-k\Delta t}$ whereas, the symmetric case is simply obtained with $\theta=0$. 

Table~\ref{tab:ou:vg:MC} reports the CPU times in seconds and compares the MC estimated values of the true $\EXP{X(T)}$, $\VAR{X(T)}$, $\SK{X(T)}$ and $\KUR{X(T)}$. The values at the top of the table are obtained with a single time step $T=\Delta t=1/5$ whereas, those at the bottom are relative to a time grid of five points once more with step $1/5$. Varying the number of simulations $N_S$, we can conclude that our algorithm is efficient and convergent, although it seems that at least $N_S=10^4$ simulations is required to achieve a good estimate.  However, although the algorithm provides an exact simulation of a OU-VG process, the generation of an entire trajectory, especially over a time grid with several points is not extremely fast compared to the simulation of other OU processes (see for instance Cufaro Petroni and Sabino\mycite{cs20_2, cs20_1}). Figure~\ref{fig:ou:vg} shows a sample trajectory using a time grid of $365$ points that is a quite common choice in financial applications relative to the pricing of a one year contract.

We conclude this section illustrating the results of the numerical experiments relative to a OU-SVG process. 
$\EXP{X(T)}=X(0) e^{-kT}$ and the skewness is zero therefore, in Table~\ref{tab:ou:sym:vg:MC} we show the CPU times in seconds and the MC estimated values of the true $\VAR{X(T)}$ and $\KUR{X(T)}$ only. The simulation has been conducted using the same time grid as the previous case whereas, the process parameters are those presented in Cummins et al.\mycite{CKM17}; Figure~\ref{fig:ou:sym:vg} shows a sample path of such a processes over a time grid of $365$ points.
Once again, the simulation algorithm is convergent and captures the true values of the benchmarks quite well. Moreover, Algorithm~\ref{alg:sim:svg} is faster than Algorithm~\ref{alg:sim:vg} of almost a factor $2$ and provides an efficient solution when the drift $\theta$ can be neglected. 

\begin{figure}
\caption{Trajectories with $M=365$, $\Delta t =1/365$.}\label{fig:trajectories}
		\begin{subfigure}[c]{.5\textwidth}{
				\includegraphics[width=70mm]{./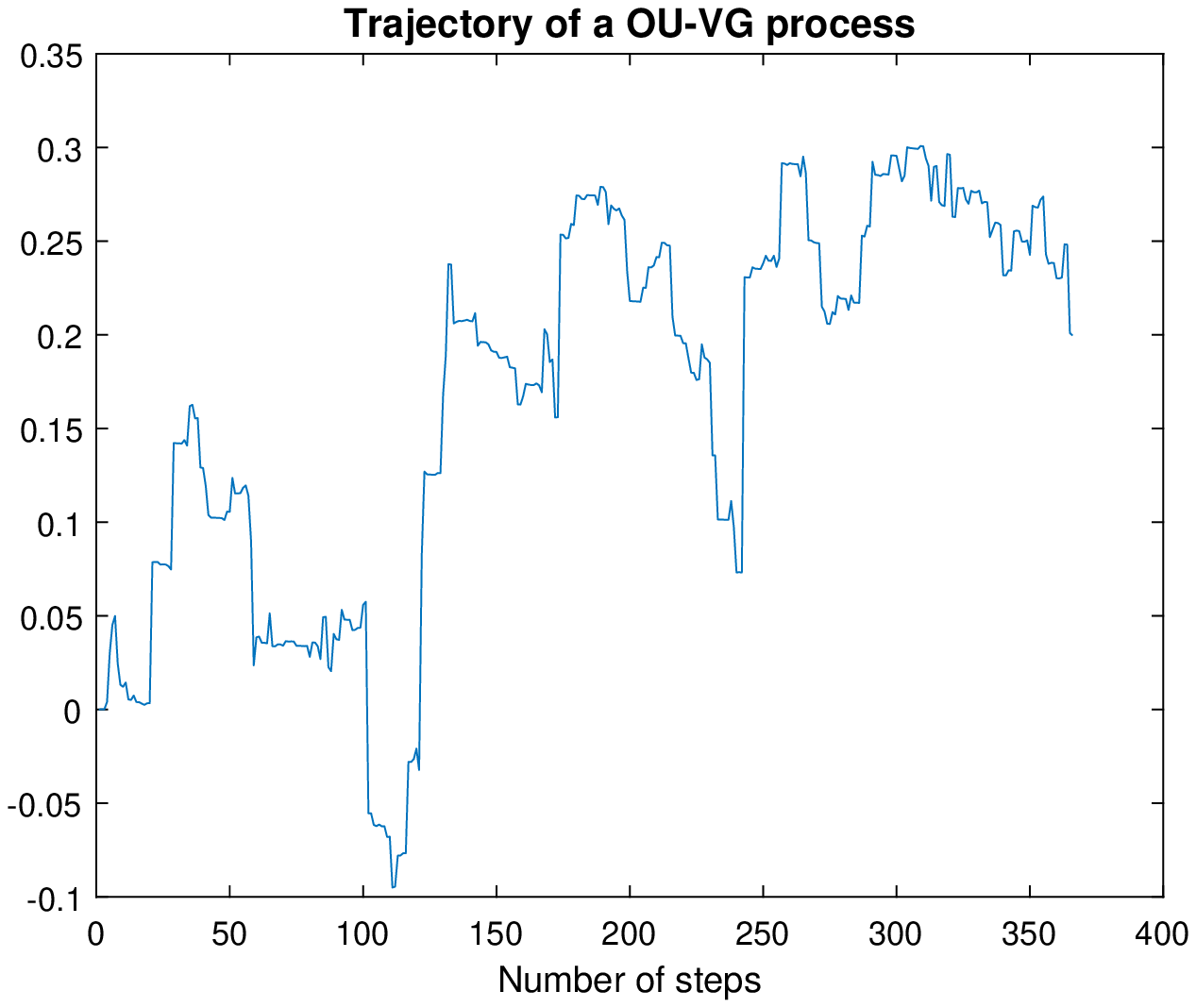}
				}
				\caption{OU-VG $X(0)=0$, $\theta=0.025$, $k=0.2$, $\nu=0.02$, $\sigma=0.3$.}\label{fig:ou:vg}
		\end{subfigure}
		\begin{subfigure}[c]{.5\textwidth}{
				\includegraphics[width=70mm]{./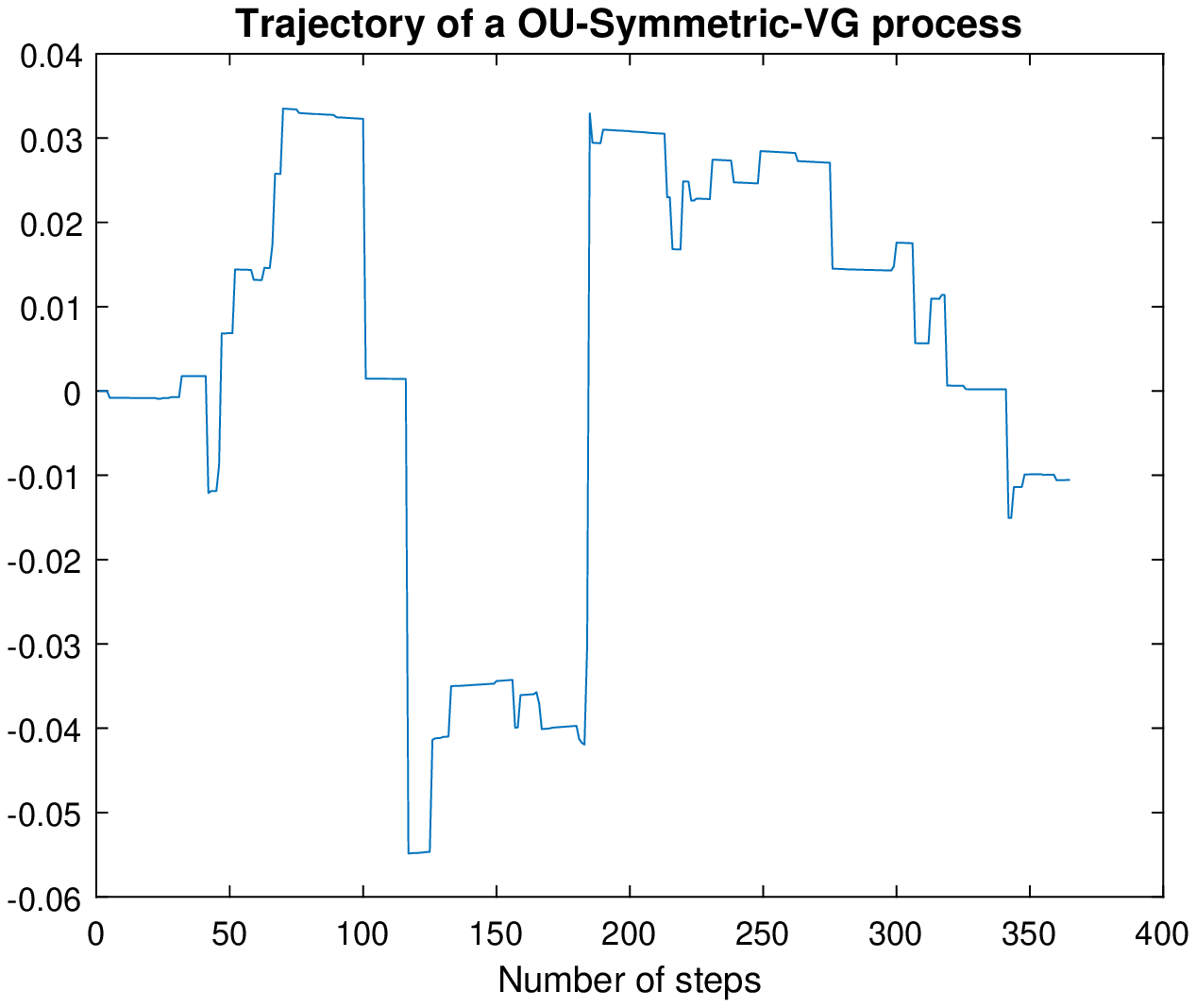}
				}
				\caption{OU-SVG with $X(0)=0$, $k=0.2162$, $\nu=0.256$, $\sigma=0.201$.}\label{fig:ou:sym:vg}
		\end{subfigure}
\end{figure}%

\begin{table}[ht!]
    \centering\scriptsize
		\resizebox{\textwidth}{!}{
        \begin{tabular}{*{10}{|cc|cc|cc|cc|cc}}
					\hline
					\multicolumn{2}{|c}{} &  \multicolumn{2}{|c}{$\EXP{X(T)} = 0.0490$} & \multicolumn{2}{|c}{$\VAR{X(T)}=0.0185$} & \multicolumn{2}{|c}{$\SK{X(T)}=0.529$} & \multicolumn{2}{|c|}{$\KUR{X(T)}=4.689$}\\			
					\hline	
					\multicolumn{10}{|c|}{$T=1/5, \Delta t =1/5$}\\					
					\hline
					$N_S$ & CPU & MC & error \% & MC & error \% & MC & error \% & MC & error \%	\\							
					\hline
					$2500$ & $0.05$ & $0.0475$ & $3.14$ & $0.0188$ & $1.68$ & $0.537$ & $1.52$ & $4.786$ & $2.05$\\
					$10000$ & $0.20$ & $0.0491$ & $0.17$ & $0.0182$ & $1.57$ & $0.502$ & $5.08$ & $4.637$ & $1.12$\\
					$40000$ & $0.77$ & $0.0489$ & $0.31$ & $0.0187$ & $1.07$ & $0.529$ & $0.07$ & $4.703$ & $0.30$\\
					$160000$ & $3.06$ & $0.0494$ & $0.82$ & $0.0186$ & $0.48$ & $0.535$ & $1.17$ & $4.666$ & $0.49$\\
					$640000$ & $12.32$ & $0.0489$ & $0.19$ & $0.0185$ & $0.02$ & $0.522$ & $1.39$ & $4.646$ & $0.93$\\
					$2560000$ & $49.39$ & $0.0491$ & $0.18$ & $0.0185$ & $0.11$ & $0.527$ & $0.38$ & $4.678$ & $0.24$\\
					\hline
					\multicolumn{2}{|c}{} &  \multicolumn{2}{|c}{$\EXP{X(T)}=0.2266$} & \multicolumn{2}{|c}{$\VAR{X(T)}=0.0793$} & \multicolumn{2}{|c}{$\SK{X(T)}=0.238$} & \multicolumn{2}{|c|}{$\KUR{X(T)}=3.342$}\\			
					\hline
					\multicolumn{10}{|c|}{$T=1, \Delta t =1/5$}\\
					\hline
					$2500$ & $0.28$ & $0.2236$ & $1.31$ & $0.0786$ & $0.97$ & $0.214$ & $9.81$ & $3.220$ & $3.67$\\
					$10000$ & $1.02$ & $0.2270$ & $0.18$ & $0.0800$ & $0.83$ & $0.248$ & $4.31$ & $3.332$ & $0.30$\\
					$40000$ & $4.02$ & $0.2272$ & $0.28$ & $0.0802$ & $1.15$ & $0.236$ & $0.70$ & $3.315$ & $0.83$\\
					$160000$ & $16.03$ & $0.2268$ & $0.09$ & $0.0794$ & $0.10$ & $0.230$ & $3.42$ & $3.316$ & $0.78$\\
					$640000$ & $64.46$ & $0.2261$ & $0.20$ & $0.0793$ & $0.02$ & $0.239$ & $0.55$ & $3.341$ & $0.03$\\
					$2560000$ & $258.51$ & $0.2265$ & $0.06$ & $0.0793$ & $0.10$ & $0.238$ & $0.30$ & $3.338$ & $0.12$\\
					\hline
        \end{tabular}				
		}
    \scriptsize
    \caption{\footnotesize{CPU times in seconds and comparison  among the true $\EXP{X(T)}$, $\VAR{X(T)}$, $\SK{X(T)}$ and $\KUR{X(T)}$ of a OU-VG process with $(k, \theta, \nu, \sigma, X(0)) =(0.2, 0.25, 0.1, 0.3, 0)$ and their relative estimated values with $N_S$ MC scenarios.}}\label{tab:ou:vg:MC}
\end{table}

\begin{table}[ht!]
    \centering\scriptsize
		\resizebox{\textwidth}{!}{
        \begin{tabular}{{|c|c|cc|cc||c|cc|cc|}}
					\hline
					& \multicolumn{5}{c||}{$T=1/5, \Delta t=1/5$} & \multicolumn{5}{c|}{$T=1, \Delta t=1/5$}\\					
					\hline	
					\multicolumn{2}{|c}{} &  \multicolumn{2}{|c|}{$\VAR{X(T)}=0.0077$} & \multicolumn{2}{c||}{$\KUR{X(T)}=6.84$} & &
					\multicolumn{2}{c}{$\VAR{X(T)}=0.0328$}  & \multicolumn{2}{|c|}{$\KUR{X(T)}=3.780$}\\			

					\hline
					$N_S$ & CPU & MC & error \% & MC & error \% & CPU & MC & error \% & MC & error \%	\\							
					\hline
$2500$ & $0.04$ & $0.0079$ & $1.96$ & $6.89$ & $0.67$ & $0.20$ & $0.0303$ & $7.69$ & $3.341$ & $11.61$\\
$10000$ & $0.12$ & $0.0075$ & $2.73$ & $6.66$ & $2.62$ & $0.64$ & $0.0328$ & $0.07$ & $3.748$ & $0.84$\\
$40000$ & $0.53$ & $0.0078$ & $0.66$ & $6.96$ & $1.79$ & $2.50$ & $0.0327$ & $0.29$ & $3.853$ & $1.94$\\
$160000$ & $2.04$ & $0.0077$ & $0.73$ & $6.91$ & $1.03$ & $10.16$ & $0.0326$ & $0.71$ & $3.717$ & $1.66$\\
$640000$ & $8.06$ & $0.0077$ & $0.18$ & $6.78$ & $0.88$ & $39.98$ & $0.0327$ & $0.38$ & $3.793$ & $0.35$\\
$2560000$ & $32.10$ & $0.0078$ & $0.32$ & $6.85$ & $0.16$ & $159.84$ & $0.0328$ & $0.04$ & $3.778$ & $0.05$\\

					\hline
        \end{tabular}				
		}
    \scriptsize
    \caption{\footnotesize{CPU times in seconds and comparison among the true $\VAR{X(T)}$ and $\KUR{X(T)}$ of a OU-Symmetric VG process with $(k, \nu, \sigma, X(0)) =(0.2162, 0.256, 0.201, 0)$ and their relative estimated values with $N_S$ MC scenarios}}\label{tab:ou:sym:vg:MC}
\end{table}


\section{Financial Applications}\label{sec:FinancialApplications}
Pricing derivative contracts or energy facilities is often accomplished using MC methods; for this purpose is therefore, necessary to rely on efficient and eventually, fast path-generation techniques. On the other hand, the day-ahead (also called spot) price of power or gas and in general of commodities exhibit mean-reversion, seasonality and spikes, this last feature is particularly difficult to be captured in a pure Gaussian world. Different approaches have been investigated in order to somehow extend the classical Gaussian framework introduced in Lucia and Schwarz\mycite{LS02} and Schwartz and Smith\mycite{SchwSchm00}. Among others, Cartea and Figueroa\mycite{CarteaFigueroa}, Kjaer\mycite{Kjaer2008}, Meyer-Brandis  and P. Tankov\mycite{MBT2008} have studied mean-reverting jump-diffusions to model sudden spikes, whereas, Benth et al.\mycite{BKM07} and Benth and A. Pircalabu\mycite{BenthPircalabu18} have considered different non-Gaussian OU processes in order to price power or wind derivative contracts. Recently, Cummins et al.\mycite{CKM17, CKM18} have addresses the pricing of gas storages via FFT in a market driven by a OU-SVG process. 

In the following subsections we illustrate the effectiveness of the algorithms presented in subsection~\ref{subsec:SimulationAlgorithms} when applied to the pricing of Asian options and of gas storages using market models similar to those considered in Cummins et al.\mycite{CKM17, CKM18}. Of course, MC methods are known to be sometimes slower than FFT and other techniques, nevertheless, they provide a view on the distribution of the potential cash-flows of derivative contracts giving a precious information to risk managers or to trading units. The calibration and in general, the parameter estimation of OU-VG processes is not the aim of the paper. However, as observed in Wolfe\mycite{WOLFE1982}, discrete first-order autoregressive processes are embedded into continuous OU processes, therefore one could use the generalized method of moments (GMM) to derive Yule-Walker-like equations and estimate the model parameters from historical data.   
  
\subsection{Asian Options\label{subsect:Asian}}
In this section we assume that the spot price of a gas market is driven by the following $2$-factors process
	\begin{equation}\label{eq:ouvg_vg_market}
		S(t) = F(0,t)\, e^{h(t) + X_1(t) + X_2(t)}=F(0,t)\, e^{h(t) + H(t)}
	\end{equation}
	where $h(t)$ is a deterministic function, $F(0,t)$ is the forward curve and $X_1(t)$ is a OU-VG process with parameters ($k, \theta_1, \nu_1, \sigma_1$). In contrast to Cummins et al.\mycite{CKM18}, we add a second independent VG process $X_2(t)$ with parameters $(\theta_2, \nu_2, \sigma_2$) to capture the long-term behavior.  
	Using the risk-neutral arguments  of the Lemma 3.1 in Hambly
et al.\mycite{HHM11},  the deterministic function $h(t)$
 consistent with forward curve is
\begin{equation}\label{eq:rn:spot}
	h(t) = -\kappa_H(1, t).
\end{equation}
where $\kappa_H(u, t)$ is the cumulant function of the process $H(t)$ at time $t$, then because of Equations\refeq{eq:che:vg} and\refeq{eq:cumulant:OUVG}
\begin{eqnarray}\label{eq:rn:spot:model1}
	h(t) &=&  -\frac{1}{k\nu}\left(\DiLog{\frac{\mu_p}{\nu_p}}-\DiLog{\frac{\mu_p\,e^{-kt}}{\nu_p}} + 
	\DiLog{-\frac{\mu_n}{\nu_n}}-\DiLog{-\frac{\mu_n\,e^{-kt}}{\nu_n}}\right) - \nonumber\\
	&&\frac{t}{\nu}\log\left(1 - \theta\nu - \frac{\sigma^2\nu}{2}\right),
\end{eqnarray}
where $\mu_p, \nu_p, \mu_n, \nu_n$ are relative to the parameters of $X_1(t)$.

Finally, we recall that the payoff at maturity $T$ of an Asian option with European style and strike price $K$ is
\begin{equation*}
	A(T) = \left(\sum_{i=1}^{d}\omega_iS(t_i) - K\right)^+.
\end{equation*}
In our numerical experiments we assume an at-the-money Asian option $K=F(0,0)=15$ having one year maturity, $T=1$, with equal weights $\omega_i=1/d$ and
with a flat forward curve. Although, as already mentioned, we do not focus on the parameters estimation, the values in Table\myref{tab:Asian:Parameters} could be considered realistic because they are based on the estimations presented in Gardini et al.\mycite{GSS20_1} for $X_2(t)$ and are similar to those shown in Cummins et al.\mycite{CKM17} for $X_1(t)$; Figure\myref{fig:trajectory:asian} shows one sample price-path generated with Algorithm\myref{alg:sim:vg} using the market model of Equation\refeq{eq:ouvg_vg_market}.

\begin{table}[!htb]
\caption{Asian option in a $2$-Factors market dynamics}\label{tab:Asian2Factor}
\begin{subtable}{.3\linewidth}
		\centering
			\begin{tabular}{|c|c|}
				\hline
				Parameter & Value \\
				\hline
				$F(0,0)$ & $15$\\
				$K$ & $15$\\
				$T$ & $1$\\
				$d$ & $360$\\
				$\kappa_1$ & $0.1859$\\
				$\theta_1$ & $0.05$\\
				$\nu_1$ & $0.4513$\\
				$\sigma_1$ & $0.203$\\
				$\theta_2$ & $0.1$\\
				$\nu_2$ & $0.2$\\
				$\sigma_2$ & $0.3$\\
				\hline
		\end{tabular}
\caption{Parameters}\label{tab:Asian:Parameters}
\end{subtable}%
\begin{subtable}{.70\linewidth}
			\centering
			\begin{tabular}{|c|c|c|c|c|c|}
				\hline
				$N_S$ & CPU & price & stdev & error & \%-error\\
				\hline
				$1000$ & $7.59$ & $1.219$ & $2.24$ & $0.071$ & $5.82$\%\\
				$10000$ & $74.87$ & $1.229$ & $2.34$ & $0.023$ & $1.90$\%\\
				$20000$ & $149.03$ & $1.227$ & $2.26$ & $0.016$ & $1.30$\%\\
				$50000$ & $386.10$ & $1.231$ & $2.26$ & $0.010$ & $0.82$\%\\
				$100000$ & $743.03$ & $1.236$ & $2.27$ & $0.0072$ & $0.58$\%\\
				\hline
		\end{tabular}
		\caption{Results}\label{tab:Asian:Results}
\end{subtable}
\end{table}
\begin{figure}\label{fig:price:trajectories}
\caption{Price Trajectories}
		\begin{subfigure}[c]{.5\textwidth}{
			\includegraphics[width=70mm]{./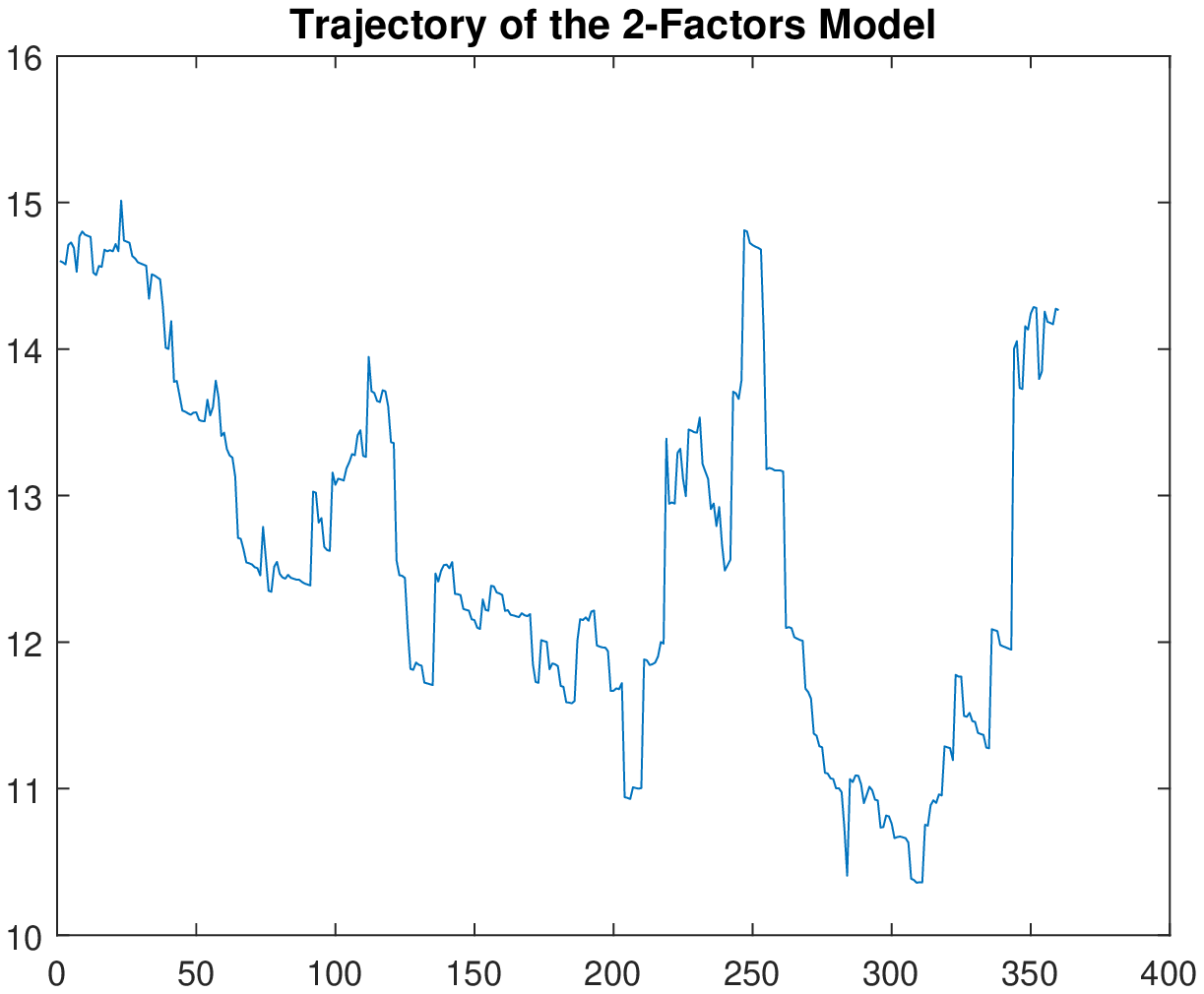}
			\caption{$2$-factors Model}\label{fig:trajectory:asian}
			}
		\end{subfigure}
		\begin{subfigure}[c]{.5\textwidth}{
			\includegraphics[width=70mm]{./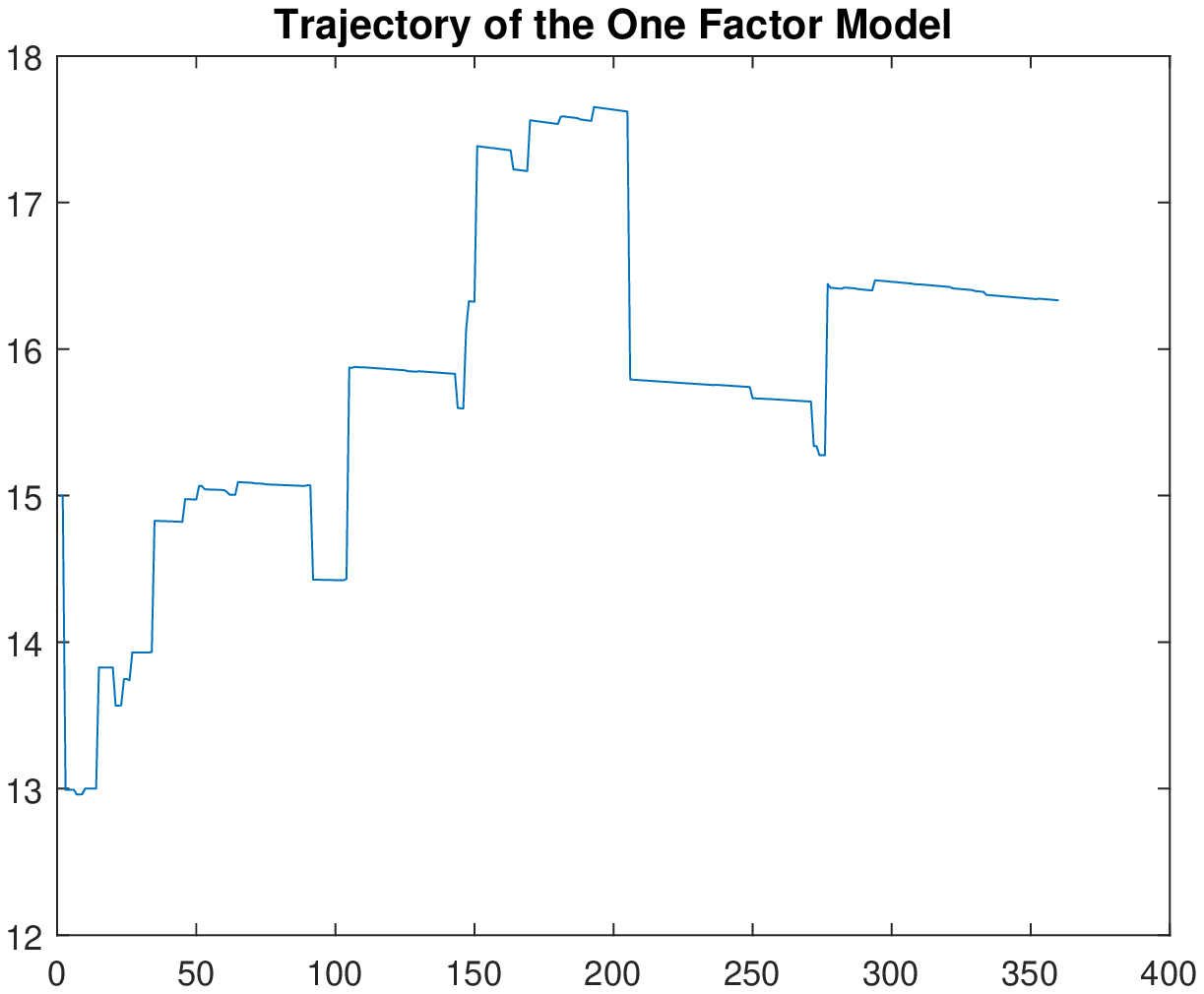}
			\caption{One factor Model}\label{fig:trajectory:gas:storage}
			}
		\end{subfigure}
\end{figure}
Table\myref{tab:Asian:Results} shows the estimated prices obtained by MC varying the number of simulations $N_S$, along with the overall computational times (CPU) in seconds. The columns \emph{stdev} and \emph{error} report the standard deviations of the MC estimator and the errors around the estimated option prices defined as the standard deviation divided by $\sqrt{N_S}$. The results illustrate that our simulation scheme is accurate indeed, the price of the option converges rapidly and the errors are very small; it seems that $10000$ simulations are good enough to have a reliable Asian option price. On the other hand, although here combined with the simulation of a standard VG process, the simulation of a OU-VG process is not extremely fast compared to the one of other generalized OU-processes (see for instance Cufaro Petroni and Sabino\mycite{cs20_2, cs20_1}). Nevertheless, it provides additional information, such as quantiles of the price distribution, that can be employed to derive risk premia and support decision making and in particular, can give an insight whether the calibrated parameters imply realistic price trajectories.

\subsection{Gas Storages}\label{subsect:GasStorages}
In contrast to the previous subsection, in the following we consider a one factor gas market similar to the model discussed in Cummins et al.\mycite{CKM17}
\begin{equation}\label{eq:ousvg_market}
	S(t) = F(0, t) e^{h(t) + X(t)},
\end{equation}
where $X(t)$ is OU-SVG with parameters $(k, \nu, \sigma)$ and once more, because of risk-neutral arguments
\begin{equation}
		h(t) = -\frac{1}{2k\nu}\left(\dilog\left(\frac{\sigma^2\nu}{2}\right) - \dilog\left(\frac{\sigma^2\nu}{2}\,e^{-kt}\right)\right).
\label{eq:spot:rn:model2}
\end{equation} 

We then adopt this market dynamics for the pricing of a fast-churn gas storage.
To this end, denote by $C(t)$ the volume of a gas storage at time $t$
with $C_{min}\le C(t)\le C_{max}$. The holder of such an energy
asset is faced with a timing problem that consists  in
deciding when to inject, to withdraw or to do-nothing.

 Denoting $J(t,x,c)$ the value of a gas storage at time $t$
given $S(t)=x$, $C(t)=c$, one can write:
            \begin{equation}\label{eq:LSMC}
                J(t,x,c) = \sup_{u\in\mathcal{U}}\mathbb{E}\left[\int_t^T \phi_u\left(S(s) \right)ds + q\left(S(T),C(T) \right)\,\right| S(t)=x, C(t)=c\bigg],
            \end{equation}
 where $\mathcal{U}$ denotes the set of  the admissible
strategies, $u(t)\in\{-1,0,1\}$ is the regime at time $t$ such that
            \begin{equation}
            \left\{
                \begin{array}{lcll}
                     \phi_{-1}(S(t)) &=& -S(t)-K_{in} a_{in}, & \text{injection} \\
                    \phi_{0}(S(t)) &=& -K_N, & \text{do nothing} \\
                    \phi_{1}(S(t)) &=& S(t)-K_{out} a_{w} &\text{withdrawal}
                \end{array}
            \right.,
            \end{equation}
$a_{in}$ and $a_{w}$ are the injection and withdrawal rates,
$K_{in}$, $K_{out}$ and $K_N$,  respectively, represent the
costs of injection, do-nothing and withdrawal,  and $q$ takes
into account the possibility of final penalties.
Based on the Bellman recurrence equation (see Bertsekas\mycite{Bertsekas05}), one can perform the following backward recursion for $i=1,\dots,d$:
                \begin{equation}
                    J(t_i,x,c) = \sup_{k\in\{-1,0,1\}} \left\{\phi_k S(t_i) + \mathbb{E}\left[ J\left(t_{i+1},S(t_{i+1}),\tilde{c}_k\right)| S(t_i)=x, C(t_i)=c\right]   \right\} , i=1,\dots,d,
                \end{equation}
                where
                \begin{equation}
                \left\{
                \begin{array}{lll}
                    \tilde{c}_{-1} &=& \min(c+a_{in}, C_{max})\\
                    \tilde{c}_{0} &=& c\\
                    \tilde{c}_{1} &=& \min(c-a_w, C_{min}).\\
                \end{array}
                \right.
                \end{equation}
A standard approach to price gas storages is a modified
version of the Least-Squares Monte Carlo (LSMC), introduced in
Longstaff-Schwartz\mycite{LSW01}, detailed in Boogert and de
Jong\mycite{BDJ08}. With this approach, the backward recursion is obtained by defining a
finite volume grid of G steps for the admissible capacities $c$ of the plant and then apply
the LSMC methodology to the continuation value per volume step. In alternative, one may solve the recursion by adapting the method proposed by  Ben-Ameur et al.\mycite{BBKL2007} or might use the quantization method as explained in Bardou et al.\mycite{BBP07} or even FFT and Fourier techniques described for instance in Jaimungal and Surkov\mycite{JaimungalSurkov11}.

Finally, we consider  a one-year fast-churn storage with the
parameters shown in Table\myref{tab:Storage:Parameters} such that $20$
days are required to fill or empty the storage.
Our framework is similar to the one presented in Cummins et. al.\mycite{CKM17} and we select the parameters of the OU-SVG shown therein. Once again, we assume a flat forward curve that does not change the validity of our experiment because we price a fast-churn storage whose extrinsic value is dominated by the short-term decisions rather than by the seasonality of the forward curve. 
In this case, Figure\myref{fig:trajectory:gas:storage} shows one sample price-path generated with Algorithm\myref{alg:sim:svg} using the market model of Equation\refeq{eq:ousvg_market}.

The columns of Table\myref{tab:Storage:Results} have the same meaning of those presented in the analysis of the Asian option. In addition, the column \emph{CPU$^*$} reports the computational times relative to the path simulation only, using Algorithm\myref{alg:sim:svg}. Indeed, the computational cost of the standard LSMC method can be split into a path generation step and into a stochastic optimization step, with the computational cost of the latter one being independent on the price dynamics and being the dominant factor to the overall computational time (CPU in Table\myref{tab:Storage:Results}). In this example, the CPU$^*$'s are approximately, one forth of the overall time, and in particular, are almost a half of the overall time required to price an Asian option with the same number of simulations. Once again, the results show that our methodology is accurate and reasonably fast. However, although still acceptable, the CPU$^*$'s are higher that those required to simulate other generalized OU processes. For instance, one of the reasons why Cummins et al.\mycite{CKM17} investigated the use of OU-SVG was to reduce the number of model parameters compared to the mean-reverting jump-diffusion model discussed in Kjaer\mycite{Kjaer2008}. In this last case however, Cufaro Petroni and Sabino\mycite{cs20_2} have presented a fast simulation algorithm for mean-reverting jump-diffusion dynamics including the case of the mean-reverting equivalent of the Kou\mycite{Kou2002} model. These computational times are faster that those presented in this study because they almost completely cut off the time needed for the path simulation that therefore, becomes negligible compared to that of the optimization step.   

It is worthwhile noticing that the performance of standard LSMC could be improved relying on the backward simulation of the price dynamics as explained in Pellegrino and Sabino\mycite{PellegrinoSabino15} and Sabino\mycite{Sabino20}. This means that one has to design a backward simulation algorithm for the OU-VG process that will be the objective of a future research. The results of Table\myref{tab:Storage:Results} show that one should rely on at least $N_S=10000$ sample paths to get an acceptable price. Of course, this also depends on the granularity of the volume grid: we have chosen $100$ equally-spaced steps. The computational performance of the LSMC based on Algorithm\myref{alg:sim:svg} is inferior to that of other numerical techniques such as FFT based approaches. Nevertheless, it has the advantage to be applicable to any payoff function, in contrast to FFT techniques that have to be adapted to each particular contract.

\begin{table}[!htb]
	\caption{Gas Storage in a $1$-Factors market dynamics}\label{tab:StorageFactor}
	\begin{subtable}{.3\linewidth}
		\centering\
			\begin{tabular}{|c|c|}
				\hline
					Parameter & Value \\
				\hline
				$F(0,0)$ & $15$\\
				$T$ & $1$\\
				$d$ & $360$\\
				$\kappa$ & $0.2162$\\
				$\nu$ & $0.2560$\\
				$\sigma$ & $0.2021$\\				  
				 $C(0) $   & $0$ \\
				 $C(T) $ & $0$\\
				 $a_{in}$ & $1$\\
				 $a_{w}$ & $1$\\
				 $C_{max}$ & $20$\\
				\hline
			\end{tabular}
			\caption{Parameters}\label{tab:Storage:Parameters}
	\end{subtable}%
	\begin{subtable}{.70\linewidth}
			\centering
			\begin{tabular}{|c|c|c|c|c|c|c|}
				\hline
				$N_S$ & CPU & CPU$^*$ & price & stdev & error & \%-error\\
				\hline
				$1000$ & $12.92$ & $4.74$ & $8.79$ & $0.33$ & $0.011$ & $0.12$\%\\
				$10000$  & $166.84$ & $47.08$ & $5.20$ & $0.49$ & $0.005$ & $0.09$\%\\
				$20000$  & $307.39$ & $92.16$ & $5.24$ & $0.34$ & $0.002$ & $0.05$\%\\
				$50000$  & $823.98$ & $139.01$ & $5.10$ & $0.37$ & $0.002$ & $0.03$\%\\
				$100000$  & $1557.09$ & $465.41$ & $5.13$ & $0.36$ & $0.001$ & $0.02$\%\\
				\hline		
			\end{tabular}
    \caption{Storage Results}\label{tab:Storage:Results}
	\end{subtable}
\end{table}

\section{Conclusions}\label{sec:Conclusions}
In this paper we have introduced a three-steps procedure to determine the law of the increment of generalized OU processes relying on the role of self-decomposability in the theory of such processes. Based on this machinery and the results of Qu et al.\mycite{QDZ19}, we have developed  efficient and accurate algorithms for the exact simulation of the OU-VG and
OU-SVG processes discussed in Cummins et al.\mycite{CKM17, CKM18} and Cufaro et al.\mycite{CDDI07}. The algorithms are accurate,
efficient, and have been numerically tested, with the associated performance reported in detail. In addition, our three-steps procedure simplifies some of the proofs presented in the cited papers and could be employed to find simulation algorithms of other generalized OU processes that will the object of future inquires. 

These results are instrumental to design algorithms to price derivative contracts in energy markets by MC simulation. To this end, we have considered the case of an Asian option in a market driven by the sum of a standard VG and a OU-VG process and the case of a fast-churn gas storage in a market driven by a OU-SVG process using the LSMC method of Boogert and de Jong\mycite{BDJ08}. Although, MC methods are slower than other numerical solutions, our algorithms give the possibility to compute the entire price distribution of derivative contracts. Although the parameters estimation is not the focus of our study, MC based techniques can also be a viable route to
estimate the model parameters because the likelihood method is impracticable. Moreover, they can provide a graphical evidence that calibrated parameters correspond to realistic sample paths.

It is also worth noticing that all the algorithms that we have discussed are based
on the sequential generation of processes. Therefore, a last topic deserving
further investigation is the possibility to simulate OU-VG processes backward in time
extending the results of Pellegrino and
Sabino\mycite{PellegrinoSabino15} and Sabino\mycite{Sabino20}.
    
		\bibliographystyle{plain}
    \bibliography{Sabino_OU_VG}
\end{document}